\newtheorem{proposition}{Proposition}
\newtheorem{proposition?}{Proposition?}
\theoremstyle{definition}
\newtheorem{definition}{Definition}
\newcommand{\real}{\mathbb R} 
\newcommand{\complex}{\mathbb C} 
\newcommand{\hi}{\mathcal{H}} 
\newcommand{\tr}[1]{\textrm{tr}\left[#1\right]} 
\newcommand{\id}{\mathbbm{1}} 
\newcommand{\A}{\mathsf{A}}
\newcommand{\B}{\mathsf{B}}
\newcommand{\E}{\mathsf{E}}
\newcommand{\G}{\mathsf{G}}
\newcommand{\Q}{\mathsf{Q}}
\newcommand{\h}{\mathcal}
\begin{document}

\title[Breaking Gaussian incompatibility]{Breaking Gaussian incompatibility on continuous variable quantum systems}

\author{Teiko Heinosaari}
\email{teiko.heinosaari@utu.fi}
\affiliation{Turku Centre for Quantum Physics, Department of Physics and Astronomy, University of Turku, FI-20014 Turku, Finland}

\author{Jukka Kiukas}
\email{jukka.kiukas@nottingham.ac.uk}
\affiliation{School of Mathematical Sciences, University of Nottingham, University Park,
Nottingham, NG7 2RD, UK}

\author{Jussi Schultz}
\email{jussi.schultz@gmail.com}
\affiliation{Dipartimento di Matematica, Politecnico di Milano, Piazza Leonardo da Vinci 32, I-20133 Milano, Italy}
\affiliation{Turku Centre for Quantum Physics, Department of Physics and Astronomy, University of Turku, FI-20014 Turku, Finland}

\begin{abstract} We characterise Gaussian quantum channels that are {\it Gaussian incompatibility breaking}, that is, transform every set of Gaussian measurements into a set obtainable from a joint Gaussian observable via Gaussian postprocessing. Such channels represent local noise which renders measurements useless for Gaussian EPR-steering, providing the appropriate generalisation of entanglement breaking channels for this scenario. Understanding the structure of Gaussian incompatibility breaking channels contributes to the resource theory of noisy continuous variable quantum information protocols.
\end{abstract}

\pacs{03.65.Ta, 03.65.Ca, 03.65.Ud}
\maketitle

\section{Introduction}

Continuous variable systems constitute the basic framework for quantum optical applications \cite{BrLo05,WeEtal12}. Consequently, quantum information ideas are typically also discussed in that setting, despite the fact that the Hilbert space $\mathcal H$ is infinite-dimensional. 
The technical problems arising from the infinite setting become largely irrelevant when one restricts to the \emph{Gaussian} scenario, where the appropriate quantum objects (states, measurements, and channels) are simply described by finite matrices. In particular, EPR-steering for Gaussian states was characterised in the seminal article \cite{WiJoDo07}, subsequently giving rise to e.g. the recent studies \cite{KoLeRaAd15, JiKiNh15}.

In the finite-dimensional setting, it has been observed \cite{UoMoGu14,QuVeBr14} that the essential quantum measurement resource for steering consists of \emph{incompatible} observables; in this way, steering motivates the study of incompatibility in general. The purpose of the present paper is to consider \emph{incompatibility of Gaussian measurements} in noisy scenarios, where the noise is accordingly described by a Gaussian quantum channel. 
More specifically, we consider Gaussian channels which transform every set of Gaussian measurements into a set having a joint Gaussian observable.
Such channels will be called \emph{Gaussian incompatibility breaking}, in analogy to the general idea of incompatibility breaking channels introduced recently in \cite{HeKiReSc15} as an appropriate generalisation of entanglement breaking channels for e.g. steering scenarios.

From the technical point of view, the formulation of joint measurability in the Gaussian setting requires some care. On the one hand, this is due to Gaussian measurements having continuous outcome sets necessitating the description of these observables literally as operator valued measures, or the corresponding operator-valued characteristic functions. On the other hand, we need to consider compatibility of an infinite number of observables, which has to be defined in terms of appropriate postprocessing functions.

The main result of the paper is a characterisation of Gaussian incompatibility breaking channels; this takes the form of a matrix inequality involving the parameters of a Gaussian channel.

The structure of the paper is as follows: in section \ref{sec:gm} we first review the structure of Gaussian states, channels, and observables using Weyl operators and characteristic functions. Then we formulate compatibility of Gaussian measurements in terms of Gaussian joint observables and Gaussian postprocessings. Section \ref{sec:gibc} is devoted to the main new notion of the paper, Gaussian incompatibility breaking channels. In section \ref{sec:con} we investigate their relationship with entanglement breaking Gaussian channels, as well as the connection to the Gaussian steerability condition given in \cite{WiJoDo07}.

\section{Gaussian measurements and their Gaussian compatibility}\label{sec:gm}

\subsection{Continuous variable systems}

The description of a continuous variable quantum system with $2N$ canonical degrees of freedom is based on the $N$-fold tensor product Hilbert space $\hi^{\otimes N}  = \bigotimes_{j=1}^N L^2(\real) \simeq L^2(\real^N)$. We denote by $\h L(\hi^{\otimes N})$ and $\h T(\hi^{\otimes N})$ the Banach spaces of bounded and trace class operators on $\hi^{\otimes N}$, respectively. Furthermore, we denote by $Q_j$ and $P_j$ the canonical position and momentum operators acting on the $j$th component Hilbert space, and arrange them into a single vector 
$$
{\bf R} = (Q_1,P_1,\ldots, Q_N,P_N)^T \, .
$$
The phase space of the system is then $\real^{2N}$, and we denote the canonical coordinates by 
$$
{\bf x}=(q_1,p_1,\ldots, q_N,p_N)^T \, .
$$
The phase space translations are represented by the Weyl operators 
$$
W({\bf x}) = e^{-i {\bf x}^T {\bf \Omega R}} \quad\text{ where }\quad {\bf \Omega} = \bigoplus_{j=1}^N \left( \begin{array}{cc} 0 & 1 \\ -1 & 0\end{array}\right),
$$
and which satisfy the commutation relation 
\begin{equation}\label{eqn:commutation}
W({\bf x})W({\bf y}) = e^{-i{\bf x}^T {\bf \Omega y}} W({\bf y})W({\bf x}) \, .
\end{equation}
We refer to  \cite{BrLo05,WeEtal12,QSCI12} for more details on the description of continuous variable quantum systems.
 
\subsection{Gaussian states, channels, and observables}

The quantum states of a continuous variable system are in general represented by positive operators $\rho\in\h T(\hi^{\otimes N})$ with unit trace, and a state $\rho$ is called a \emph{Gaussian state} if its Weyl transform (or characteristic function) is a Gaussian function 
\begin{equation*}
\tr{\rho W({\bf x})} = e^{-\frac{1}{4}{\bf x}^T {\bf \Omega}^T{\bf V} {\bf \Omega}{\bf x} - i({\bf \Omega r})^T {\bf x}}  
\end{equation*}
where the \emph{covariance matrix} ${\bf V}$ is defined as 
\begin{equation*}
V_{ij} = \tr{\rho\{ R_i - r_i,R_j-r_j \}}
\end{equation*}
 with $r_j=\tr{\rho R_j}$ defining the displacement vector ${\bf r}$. 
The covariance matrix satisfies the uncertainty relation $
{\bf V} + i{\bf \Omega }\geq 0$, which is also a sufficient condition for a real symmetric matrix to be a valid covariance matrix \cite{SiMuDu94}.

A \emph{quantum channel} is defined in the Heisenberg picture as a completely positive normal linear map $\Lambda:\h L(\hi^{\otimes N'})\to\h L(\hi^{\otimes N})$ satisfying unitality $\Lambda(\id) =\id$. Normality, i.e. weak-* continuity, is required in the infinite-dimensional case to guarantee the existence of the Schr\"odinger picture version of the channel. This is given by the preadjoint map $\Lambda_* :\h T(\hi^{\otimes N})\to\h T(\hi^{\otimes N'})$ which is completely positive and trace preserving. The connection between the Heisenberg and Schr\"odinger pictures is given by the duality relation 
\begin{equation*}
\tr{\rho \Lambda(A)}=\tr{\Lambda_*(\rho)A} \, .
\end{equation*} 
Note that we allow the input and output spaces to be different, so that the channel may map a quantum system into another system having more or less degrees of freedom. A channel $\Lambda$ is called a \emph{Gaussian channel} if it maps Gaussian states into Gaussian states \cite{WeEtal12,GeCi02}. We describe them by their action on the Weyl operators as in \cite{DeVaVe77}, so that
\begin{equation}\label{eqn:gaussian_channel}
\Lambda(W({\bf x})) = W({\bf Ax})e^{-\frac{1}{4}{\bf x}^T {\bf Bx} - i{\bf c}^T{\bf x}}
\end{equation}
where ${\bf A}$ is a real $2N\times 2N'$ real matrix, and ${\bf B}$ is a real $2N'\times 2N'$-matrix which satisfy the complete positivity condition 
\begin{equation}\label{eqn:cp_channel}
{\bf B} + i {\bf \Omega} - i{\bf A}^T{\bf \Omega A} \geq {\bf 0} \, ,
\end{equation}
 whereas ${\bf c}\in\real^{2N'}$ can be arbitrary \cite{DeVaVe77, HoWe01, QSCI12}. Since a Gaussian channel is uniquely determined by the triple $({\bf A}, {\bf B}, {\bf c})$, we often denote $\Lambda= \Lambda_{({\bf A}, {\bf B}, {\bf c})}$. It should be noted that since we allow the input and output space of a channel to be different, the two ${\bf \Omega}$-matrices appearing in \eqref{eqn:cp_channel} typically operate on different phase spaces. However, since the phase space is always clear from the context, we refrain from indicating this explicitly.   

A \emph{quantum observable} is defined mathematically as a positive operator valued measure (POVM) $\E:\h B (\real^M)\to\h L(\hi^{\otimes N})$ (here $\h B(\real^M)$ denotes the Borel $\sigma$-algebra of $\real^M$), i.e., $\E(X)\geq 0$ for all $X$, $\E(\real^M) = \id$, and $\sigma$-additivity $\E(\cup_i X_i) = \sum_i \E(X_i)$ holds for all sequences of disjoint sets $X_i$.

An observable $\E$ is said to be {\em Gaussian} if a Gaussian state always yields a Gaussian measurement outcome distribution. This is the case whenever the  Fourier transform of $\E$ is given by
\begin{equation*}
\widehat{\E}({\bf p}) = \int e^{i{\bf p}^T{\bf x}}\, {\rm d}\E ({\bf x})= W({\bf Kp}) e^{-\frac{1}{4} {\bf p}^T{\bf Lp} - i{\bf m}^T{\bf p}},
\end{equation*}
where ${\bf K}$ is a $2N\times M$-matrix, and ${\bf L}$ is a $M\times M$-matrix satisfying ${\bf L} - i{\bf K}^T{\bf \Omega K}\geq {\bf 0}$ \cite{KiSc13}. Hence, an arbitrary Gaussian observable can be described by matrices as well.

An important example of a Gaussian observable is obtained by choosing $M=1$, ${\bf K}={\bf k}^T=(k_1,\ldots,k_{2N})^T$, ${\bf L}={\bf m}={\bf 0}$; then $\widehat{\E}(p)=e^{-i p {\bf k}^T{\bf \Omega R}}$, which is just the unitary family corresponding to the Hermitian operator 
$$
\sum_{j=1}^{N} (k_{2j-1} Q_j-k_{2j} P_j),
$$ 
which is a linear combination of the canonical quadrature operators. We call such an operator a \emph{generalised quadrature}. The canonical quadratures $Q_j$ taken together form the prototypical Gaussian observable, namely the \emph{canonical position observable} $\Q:\h B(\real^M)\to\h L(\hi^{\otimes M})$, acting as
\begin{equation*}
\left[ \Q(X)\psi\right]({\bf x}) = \chi_X({\bf x})\psi ({\bf x}) \, ,
\end{equation*}
where $\chi_X$ denotes the indicator function of the set $X$. Any Gaussian observable can always be obtained from the canonical position observable by applying a suitable Gaussian channel (see e.g. \cite{KiSc13}). Since dilations of Gaussian channels are well known \cite{CaEiGiHo08}, this immediately gives us a measurement dilation for a Gaussian observable $\E$: an auxiliary system is first prepared in a Gaussian state, after which the two systems are coupled by a unitary operator causing an affine symplectic transformation on the phase space of the composite system, and finally $\Q$ is measured on part of the transformed system (we refer to \cite{KiSc13} for more details). In quantum optical applications this has a clear physical meaning since the desired unitary coupling can be achieved by multi-port interferometry combined with squeezing, whereas the measurement of $\Q$ just corresponds to homodyne detection \cite{WeEtal12}.

A Gaussian channel transforms any Gaussian observable into another Gaussian observable. In fact, since
$$
\Lambda_{({\bf A}, {\bf B}, {\bf c})} (\widehat{\E}({\bf p})) = W({\bf AKp}) e^{-\frac{1}{4} {\bf p}^T({\bf L}+{\bf K}^T{\bf BK}){\bf p} - i({\bf m} +{\bf K}^T{\bf c})^T{\bf p}}
$$ 
we find that the action of the channel causes the transformation
\begin{equation}\label{eqn:gaussian_preprocessing}
({\bf K}, {\bf L},{\bf m})\mapsto ({\bf AK}, {\bf L} + {\bf K}^T  {\bf BK},{\bf m} + {\bf K}^T{\bf c})
\end{equation}
on the parameters of the observable. In particular, an initially projective (${\bf L}={\bf 0}$) Gaussian observable typically acquires additional noise in the form of ${\bf K}^T  {\bf BK}$, making it non-projective.

\subsection{Gaussian postprocessings and Gaussian compatibility}
A pair of (not necessarily Gaussian) observables $\E_1:\h B(\real^{M_1})\to\h L(\hi^{\otimes N})$ and $\E_2:\h B(\real^{M_2})\to\h L(\hi^{\otimes N})$ is  said to be {\em compatible}, or {\em jointly measurable}, if there exists a joint observable $\G:\h B(\real^{M_1+M_2})\to\h L(\hi^{\otimes N})$ having $\E_1$ and $\E_2$ as the (Cartesian) \emph{margins}:
\begin{equation*}
\E_1(X_1) = \G(X_1\times \real^{M_2}), \qquad \E_2(X_2) = \G(\real^{M_1}\times X_2). 
\end{equation*}
This definition can be naturally extended to any finite number of observables, but when dealing with the compatibility of an infinite set of observables one obviously runs into trouble. Therefore it is necessary to adjust this definition by allowing more general {\em postprocessing} of the joint observable, rather than just considering its margins. The appropriate mathematical generalization is given by Markov kernels: a Markov kernel is a map $f:\h B (\real^{M'})\times\real^M \to[0,1]$ such that $X\mapsto f(X,{\bf x})$ is a probability measure for all ${\bf x}\in\real^M$, and ${\bf x}\mapsto f(X,{\bf x})$ is measurable for all $X\in\h B(\real^{M'})$. An observable $\E$ on $\real^{M'}$ is then a postprocessing of another observable $\G$ on $\real^M$ if 
\begin{equation}\label{eqn:postprocessing}
\E(X) = \int f(X,{\bf x})\, {\rm d}\G({\bf x})
\end{equation}
for some postprocessing function, i.e., a Markov kernel $f$. A collection of observables $\h M$ is said to be \emph{compatible}, if there exists a joint observable $\G$ such that each $\E\in\h M$ is a postprocessing of $\G$ with some postprocessing function $f_\E$. It should be noted that for a finite set of observables, this definition of compatibility is equivalent to the more typical one involving only the Cartesian margins \cite{AlCaHeTo09}. In view of the hidden state models appearing in the steering context, the general definition via postprocessing functions \eqref{eqn:postprocessing} is more natural even in the finite setting; see \cite{HeKiReSc15} for discussion.

In this paper we are only interested in a fully Gaussian scenario; accordingly, we need to restrict the set of allowed postprocessing functions to those which transform Gaussian observables into other Gaussian ones. Since Gaussian observables are determined by the triples $({\bf K}, {\bf L}, {\bf m})$, a suitable postprocessing function should induce a transformation of these parameters.  If we calculate the Fourier transform of the observable $\E$ in Eq.~\eqref{eqn:postprocessing} we find that
$$
\widehat{\E}({\bf p})  = \int \left( \int e^{i{\bf p}^T{\bf y}} f({\rm d}{\bf y}, {\bf x}) \right) \, {\rm d}\G({\bf x}).
$$ 
In other words, by defining $w({\bf p})$ to be the character $w({\bf p})({\bf y}) = e^{i{\bf p}^T{\bf y}}$ we see that the postprocessing induces the transformation 
$$
w({\bf p}) \mapsto \lambda(w({\bf p})) = \int e^{i{\bf p}^T{\bf y}} f({\rm d}{\bf y}, \cdot). 
$$
In analogy with Eq.~\eqref{eqn:gaussian_channel}, we now define a {\em Gaussian postprocessing} to be one acting as a classical Gaussian channel:
\begin{equation*}
\lambda (w({\bf p})) = w({\bf Ap}) e^{-\frac{1}{4}{\bf p}^T {\bf Bp} - i{\bf c}^T{\bf p}}
\end{equation*}
for some $M'\times M$ matrix ${\bf A}$, a positive $M\times M$ matrix ${\bf B}$, and a vector ${\bf c}$. As with channels and observables, we use the triple notation $({\bf A}, {\bf B}, {\bf c})$ when referring to a Gaussian postprocessing.

If  $\G$ is a Gaussian observable determined by the triple $({\bf K}, {\bf L}, {\bf m})$, we have
\begin{eqnarray*}
 \widehat{\E}({\bf p})&=&\int \lambda(w({\bf p})) \, {\rm d}\G \\
 &=& W({\bf KAp})e^{-\frac{1}{4}{\bf p}^T({\bf B} + {\bf A}^T {\bf LA}) {\bf p} - i({\bf c + {\bf A}^T {\bf m})^T{\bf p}}} \, ,
\end{eqnarray*}
so that the Gaussian postprocessing with parameters $({\bf A}, {\bf B}, {\bf c})$ indeed induces the transformation
\begin{equation}\label{eqn:gaussian_postprocessing}
({\bf K}, {\bf L}, {\bf m})\mapsto ({\bf KA}, {\bf B} + {\bf A}^T{\bf L}{\bf A}, {\bf c} + {\bf A}^T{\bf m})
\end{equation}
on the parameters of the observable; this should be compared with Eq.~\eqref{eqn:gaussian_preprocessing}. We now see that this transformation can be obtained by combining two simple postprocessings of the form \eqref{eqn:postprocessing}: coordinate transformations and convolution type smearing. Firstly, if ${\bf A}$ is  a $M'\times M$ matrix, then we can define the Markov kernel $f_{\bf A} (X, {\bf x}) = \chi_X({\bf A}^T {\bf x})$  which, when applied to a Gaussian observable, induces the transformation
$$
({\bf K}, {\bf L}, {\bf m})\mapsto ({\bf KA}, {\bf A}^T{\bf LA}, {\bf A}^T{\bf m})
$$  
Secondly, if $\mu$ is a Gaussian probability measure on $\real^M$ with $\widehat{\mu}({\bf p}) = e^{-\frac{1}{4}{\bf p}^T {\bf Bp} - i {\bf c}^T{\bf p}}$ for a positive matrix ${\bf B}$, then the Markov kernel $f_\mu(X,{\bf x}) = \mu (X-{\bf x})$ defines a smearing of $\E$ which we denote by $\mu*\E$. Since the Fourier transform maps convolutions into products, we notice that the induced transformation on the parameters is 
$$
({\bf K}, {\bf L}, {\bf m})\mapsto ({\bf K}, {\bf B} + {\bf L}, {\bf c} + {\bf m}).
$$
Hence, by applying these two natural postprocessings in sequel, we obtain the desired general postprocessing transformation \eqref{eqn:gaussian_postprocessing}. We are now ready to define compatibility in the Gaussian setting:
\begin{definition}\label{def:gcomp}
A collection $\h M$ of Gaussian observables is {\em Gaussian compatible} if there exists a Gaussian observable $\G$ such that any $\E\in\h M$ can be obtained from $\G$ via Gaussian postprocessing.
\end{definition}

The next result shows that for a finite collection of Gaussian observables, Gaussian compatibility is also equivalent to joint measurability in the usual sense, i.e. the existence of a joint Gaussian observable which gives the original observables as its margins.

\begin{proposition}\label{prop:finite}
Let $\E_j:\h B(\real^{M_j})\to\h L(\hi^{\otimes N})$ be a Gaussian observable for each $j=1,\ldots, n$. Then the collection $\{\E_1,\ldots, \E_n\}$ is Gaussian compatible if and only if there exists a Gaussian observable $\G:\h B(\real^M)\to\h L(\hi^{\otimes N})$, where $M= \sum_{j=1}^n M_j$, which gives the $\E_j$:s as its margins.
\end{proposition}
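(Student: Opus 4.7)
The plan is to handle the two implications separately. The backward direction (existence of a joint Gaussian observable implies Gaussian compatibility) is essentially immediate: given $\G$ with parameters $({\bf K}',{\bf L}',{\bf m}')$ on $\real^M$ whose Cartesian margins are the $\E_j$, each margin is recovered by the Markov kernel $f_j(X_j,{\bf x}) = \chi_{X_j}(\pi_j {\bf x})$, where $\pi_j : \real^{M} \to \real^{M_j}$ is the projection onto the $j$-th block. This is a Gaussian postprocessing in the sense of the paper, namely the one with parameter triple $(\pi_j^T,{\bf 0},{\bf 0})$, so each $\E_j$ is obtained from $\G$ by a Gaussian postprocessing.

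For the forward direction, suppose the collection is Gaussian compatible: there exists a Gaussian observable $\G$ with parameters $({\bf K},{\bf L},{\bf m})$ on some $\real^{M_G}$ together with Gaussian postprocessings $({\bf A}_j,{\bf B}_j,{\bf c}_j)$ producing $\E_j$ from $\G$; by \eqref{eqn:gaussian_postprocessing}, the parameters of $\E_j$ are $({\bf K}{\bf A}_j,\,{\bf B}_j+{\bf A}_j^T{\bf L}{\bf A}_j,\,{\bf c}_j+{\bf A}_j^T{\bf m})$. Now assemble the joint observable $\G'$ on $\real^{M}$, $M=\sum_j M_j$, by the triple
$$
{\bf K}' = {\bf K}{\bf A}, \qquad {\bf L}' = \mathrm{diag}({\bf B}_1,\ldots,{\bf B}_n)+{\bf A}^T{\bf L}{\bf A}, \qquad {\bf m}' = \bigl({\bf c}_j+{\bf A}_j^T{\bf m}\bigr)_{j=1}^{n},
$$
where ${\bf A}=[{\bf A}_1\mid\cdots\mid{\bf A}_n]$ is the horizontal concatenation. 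I would then check that this is a valid Gaussian observable: writing ${\bf L}'-i({\bf K}')^T{\bf \Omega}{\bf K}' = \mathrm{diag}({\bf B}_j)+{\bf A}^T\bigl({\bf L}-i{\bf K}^T{\bf \Omega}{\bf K}\bigr){\bf A}$, both summands are positive semidefinite (the first because each ${\bf B}_j\geq 0$ as a classical-noise covariance, the second because $\G$ is a valid Gaussian observable).

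It remains to verify that the Cartesian margins of $\G'$ are the $\E_j$. Since a Gaussian observable is determined by its Fourier transform, and the $j$-th margin is obtained by evaluating $\widehat{\G'}({\bf p})$ at those ${\bf p}\in\real^M$ with all blocks but the $j$-th set to zero, the block structure of ${\bf A}$, $\mathrm{diag}({\bf B}_j)$, and ${\bf m}'$ yields ${\bf K}'{\bf p}={\bf K}{\bf A}_j{\bf p}_j$, ${\bf p}^T{\bf L}'{\bf p} = {\bf p}_j^T({\bf B}_j+{\bf A}_j^T{\bf L}{\bf A}_j){\bf p}_j$, and $({\bf m}')^T{\bf p} = ({\bf c}_j+{\bf A}_j^T{\bf m})^T{\bf p}_j$, which exactly reproduces $\widehat{\E}_j({\bf p}_j)$. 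The main obstacle in writing this up is mostly notational: keeping the block dimensions and the distinction between preprocessing $({\bf A},{\bf B},{\bf c})$ and the postprocessing convention of \eqref{eqn:gaussian_postprocessing} straight. Conceptually the construction is just the observation that concatenating Gaussian postprocessings yields another Gaussian postprocessing whose marginal projections recover the original ones.
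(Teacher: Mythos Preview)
Your proposal is correct and follows essentially the same route as the paper: the backward direction identifies taking margins as the Gaussian postprocessing $(\pi_j^T,{\bf 0},{\bf 0})$, and the forward direction builds the joint observable from the concatenated matrices ${\bf K}' = {\bf K}{\bf A}$ and ${\bf L}' = \mathrm{diag}({\bf B}_j) + {\bf A}^T{\bf L}{\bf A}$, verifying positivity via ${\bf L}' - i({\bf K}')^T{\bf \Omega}{\bf K}' = \mathrm{diag}({\bf B}_j) + {\bf A}^T({\bf L} - i{\bf K}^T{\bf \Omega}{\bf K}){\bf A}$. The only cosmetic difference is that the paper first writes ${\bf L}_0$ in block-diagonal form before rewriting it as ${\bf B} + {\bf A}^T{\bf L}{\bf A}$ for the positivity check; your version goes straight to the latter expression, which is in fact the one needed both for the positivity argument and for the margins to come out right.
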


\begin{proof}
Suppose first that the $\E_j$:s have a Gaussian joint observable $\G$ on $\real^M$, and let $\G$ be determined by  the triple $({\bf K}, {\bf L}, {\bf m})$. We can write these in block form as
$$
{\bf K} = \left( 
{\bf K}_1, \ldots , {\bf K}_n
\right), \quad
{\bf L} = \left( \begin{array}{ccc}   
{\bf L}_{11} & \cdots & {\bf L}_{1n}\\
\vdots & \ddots & \vdots\\
{\bf L}_{n1} & \cdots & {\bf L}_{nn}
\end{array}\right), 
$$
and 
$$
{\bf m} = \left( \begin{array}{c}   
{\bf m}_1 \\
 \vdots \\
  {\bf m}_n
\end{array}\right)
$$
where now ${\bf K}_j$, ${\bf L}_{jj}$, and ${\bf m}_j$ are the parameters of the $j$th margin. In other words, taking the $j$th margin corresponds to a Gaussian postprocessing with $({\bf A}_j, {\bf 0}_j, {\bf 0}_j)$ where ${\bf A}_j$ is the $M\times M_j$ matrix 
$$
{\bf A}_j = \left( \begin{array}{c}   
{\bf 0}_{M_1\times M_j} \\
 \vdots \\
 {\bf 0}_{M_{j-1}\times M_j}\\
  {\bf I}_{M_j\times M_j}\\
  {\bf 0}_{M_{j+1}\times M_j}\\
  \vdots\\
  {\bf 0}_{M_{n}\times M_j}
\end{array}\right)
$$
Hence, the observables are Gaussian compatible in the sense of Def. \ref{def:gcomp}. 

Suppose then  that the $\E_j$ are Gaussian compatible, and obtained from a Gaussian observable with parameters $({\bf K}, {\bf L}, {\bf m})$ via Gaussian postprocessings given by $({\bf A}_j, {\bf B}_j, {\bf c}_j)$. This means that 
$$
({\bf K}_j, {\bf L}_j, {\bf m}_j) =({\bf K}{\bf A}_j, {\bf B}_j + {\bf A}_j^T{\bf L}{\bf A}_j, {\bf c}_j + {\bf A}_j^T{\bf m}) 
$$
for each $j$. We can now define the matrices ${\bf K}_0= ({\bf KA}_1,  \ldots,  {\bf KA}_n)$ and ${\bf L}_0 = {\rm diag} ({\bf B}_1 + {\bf A}_1^T {\bf LA}_1 , \ldots, {\bf B}_n + {\bf A}_n^T {\bf LA}_n )$, and the vector 
$$
{\bf m}_0  = \left( \begin{array}{c}   
{\bf c}_1 + {\bf A}_1^T{\bf m} \\
 \vdots \\
{\bf c}_n + {\bf A}_n^T{\bf m} 
\end{array}\right).
$$
What remains is to show that these parameters determine a valid Gaussian observable. 
By defining 
$$
{\bf A} = ({\bf A}_1, \ldots, {\bf A_n}) \, , \qquad {\bf B} = {\rm diag}({\bf B}_1, \ldots, {\bf B}_n) \, , 
$$
 we have that ${\bf K}_0 = {\bf KA}$ and ${\bf L}_0 = {\bf B} + {\bf A}^T {\bf LA}$. This implies directly that 
\begin{equation*}
 {\bf L}_0 + i{\bf K}_0^T {\bf \Omega}{\bf K}_0 = {\bf B} + {\bf A}^T({\bf L} + i{\bf K}^T {\bf \Omega}{\bf K}){\bf A}\geq 0 \, .
\end{equation*} 
In conclusion, the Gaussian observable determined by $({\bf K}_0, {\bf L}_0, {\bf m}_0)$ is the desired joint observable for the $\E_j$.
\end{proof}

\section{Gaussian incompatibility breaking channels}\label{sec:gibc}

We now proceed to define the central concept of \cite{HeKiReSc15} in the Gaussian setting:

\begin{definition}
We say that a Gaussian channel $\Lambda$ \emph{breaks the Gaussian incompatibility} of a subset $\mathcal M$ of Gaussian observables if the set $\Lambda(\mathcal M)$ is Gaussian compatible. We say that $\Lambda$ is {\em Gaussian incompatibility breaking} if it breaks the Gaussian incompatibility of the set of all Gaussian observables. 
\end{definition}

The following proposition is the main result of the paper:
\begin{proposition}\label{prop:gaussian_ibc}
A Gaussian channel $\Lambda_{({\bf A},{\bf B},{\bf c})}$ is Gaussian incompatibility breaking if and only if 
\begin{equation*}
{\bf B} - i{\bf A}^T{\bf \Omega} {\bf A}\geq {\bf 0}.
\end{equation*}
\end{proposition}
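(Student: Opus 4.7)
My plan is to prove the two implications separately, working entirely at the level of the parametric data $({\bf K},{\bf L},{\bf m})$ of Gaussian observables and using the transformation formulas \eqref{eqn:gaussian_preprocessing} and \eqref{eqn:gaussian_postprocessing}.

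For sufficiency, the key observation is that the stated inequality ${\bf B} - i{\bf A}^T{\bf \Omega A}\geq 0$ is exactly the validity condition for a Gaussian observable on $\hi^{\otimes N}$ with outcome space $\real^{2N'}$ and parameters $({\bf K}_0,{\bf L}_0,{\bf m}_0) := ({\bf A},{\bf B},{\bf c})$; call this observable $\G_0$. For any Gaussian observable $\E$ on $\hi^{\otimes N'}$ with parameters $({\bf K},{\bf L},{\bf m})$, formula \eqref{eqn:gaussian_preprocessing} gives $\Lambda(\E)$ the parameters $({\bf AK},{\bf L}+{\bf K}^T{\bf BK},{\bf m}+{\bf K}^T{\bf c})$, which by comparison with \eqref{eqn:gaussian_postprocessing} coincide with those obtained by Gaussian postprocessing $\G_0$ through the triple $({\bf K},{\bf L},{\bf m})$ itself. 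The positivity of ${\bf L}$ required for this to be a legal Gaussian postprocessing is automatic: the validity ${\bf L} - i{\bf K}^T{\bf \Omega K}\geq 0$ combined with the real antisymmetry of ${\bf K}^T{\bf \Omega K}$ forces ${\bf L}\geq 0$ as a real symmetric matrix. Hence $\G_0$ is a single Gaussian observable yielding every $\Lambda(\E)$ via Gaussian postprocessing, so $\Lambda$ is Gaussian incompatibility breaking.

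For necessity, I would test the Gaussian incompatibility breaking property on the family of sharp one-parameter quadratures $\E_{\bf k}$ with parameters $({\bf k},0,0)$, ${\bf k}\in\real^{2N'}$. Their images $\Lambda(\E_{\bf k})$ have parameters $({\bf Ak},{\bf k}^T{\bf Bk},{\bf k}^T{\bf c})$; Gaussian compatibility of this family provides a common Gaussian observable $\G = ({\bf K}_0,{\bf L}_0,{\bf m}_0)$ together with, for each ${\bf k}$, a Gaussian postprocessing given by a column vector ${\bf a}_{\bf k}$ and a scalar $b_{\bf k}\geq 0$. Formula \eqref{eqn:gaussian_postprocessing} then forces
\begin{equation*}
{\bf K}_0{\bf a}_{\bf k} = {\bf Ak},\qquad {\bf a}_{\bf k}^T{\bf L}_0{\bf a}_{\bf k}\leq {\bf k}^T{\bf Bk}.
\end{equation*}

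The main obstacle is converting this real, ${\bf k}$-by-${\bf k}$ data into the complex Hermitian inequality ${\bf B} - i{\bf A}^T{\bf \Omega A}\geq 0$ without assuming any linearity of ${\bf k}\mapsto {\bf a}_{\bf k}$. My approach is a complexification: for any ${\bf u},{\bf w}\in\real^{2N'}$, apply the Hermitian positivity ${\bf L}_0 - i{\bf K}_0^T{\bf \Omega K}_0\geq 0$ to the complex vector $v := {\bf a}_{\bf u} + i{\bf a}_{\bf w}$. A direct computation that exploits the real antisymmetry of both ${\bf K}_0^T{\bf \Omega K}_0$ and ${\bf A}^T{\bf \Omega A}$, together with the real symmetry of ${\bf L}_0$, gives
\begin{equation*}
v^*({\bf L}_0 - i{\bf K}_0^T{\bf \Omega K}_0)v = {\bf a}_{\bf u}^T{\bf L}_0{\bf a}_{\bf u}+{\bf a}_{\bf w}^T{\bf L}_0{\bf a}_{\bf w}+2{\bf u}^T{\bf A}^T{\bf \Omega A}{\bf w}\geq 0,
\end{equation*}
and substituting the upper bounds ${\bf a}_{\bf u}^T{\bf L}_0{\bf a}_{\bf u}\leq {\bf u}^T{\bf Bu}$ and ${\bf a}_{\bf w}^T{\bf L}_0{\bf a}_{\bf w}\leq {\bf w}^T{\bf Bw}$ yields ${\bf u}^T{\bf Bu}+{\bf w}^T{\bf Bw}+2{\bf u}^T{\bf A}^T{\bf \Omega A}{\bf w}\geq 0$ for all real ${\bf u},{\bf w}$, which is precisely the stated Hermitian positivity evaluated on ${\bf u}+i{\bf w}$.
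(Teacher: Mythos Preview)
Your proposal is correct. The sufficiency direction is essentially identical to the paper's (the only cosmetic difference is that you put the displacement ${\bf c}$ into $\G_0$ while the paper puts it into the postprocessing).

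For necessity, your route differs from the paper's in organisation and in the final computation. The paper argues by contrapositive: assuming the matrix inequality fails at some ${\bf z}={\bf x}+i{\bf y}$, it takes the two quadratures $\E_{\bf x},\E_{\bf y}$, invokes Prop.~\ref{prop:finite} to reduce Gaussian compatibility of the pair $\{\Lambda(\E_{\bf x}),\Lambda(\E_{\bf y})\}$ to the existence of a Cartesian joint observable on $\real^2$, and then shows by a $2\times 2$ determinant calculation that the validity condition of any such joint observable fails. You instead argue directly from the definition of Gaussian compatibility: a common $\G=({\bf K}_0,{\bf L}_0,{\bf m}_0)$ and the postprocessing data ${\bf a}_{\bf k}$ give ${\bf K}_0{\bf a}_{\bf k}={\bf Ak}$ and ${\bf a}_{\bf k}^T{\bf L}_0{\bf a}_{\bf k}\leq {\bf k}^T{\bf Bk}$, and evaluating the Hermitian form ${\bf L}_0-i{\bf K}_0^T{\bf \Omega K}_0\geq 0$ on ${\bf a}_{\bf u}+i{\bf a}_{\bf w}$ yields the claim. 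Your argument avoids Prop.~\ref{prop:finite} and the determinant step, and it makes transparent that only \emph{pairs} of quadratures are used (so it also directly yields the equivalence in Prop.~\ref{prop:equivalent}); the paper's version has the advantage of producing an explicit witnessing pair of incompatible observables when the condition fails. Underneath, both arguments rest on the same expansion of the Hermitian form into real and imaginary parts using the antisymmetry of ${\bf K}_0^T{\bf \Omega K}_0$ and ${\bf A}^T{\bf \Omega A}$.
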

\begin{proof}
Suppose first that ${\bf B} - i{\bf A}^T{\bf \Omega} {\bf A}\geq {\bf 0}$. This implies that we can define a Gaussian observable $\G:\h B(\real^{2N'})\to\h L(\hi^{\otimes N})$ with the parameters $({\bf A}, {\bf B}, {\bf 0})$, that is, 
\begin{equation*}
\widehat{\G}({\bf p}) = W({\bf Ap}) e^{-\frac{1}{4}{\bf p}^T {\bf Bp}}.
\end{equation*}
Let $\E:\h B(\real^M)\to\h L(\hi^{\otimes N'})$ be a Gaussian observable with the parameters $({\bf K}, {\bf L},{\bf m})$. 
Since ${\bf L} - i{\bf K}^T{\bf \Omega K}$ is positive, it is in particular selfadjoint, so that 
\begin{equation*}
({\bf L} - i{\bf K}^T{\bf \Omega K})^* = {\bf L}^T - i{\bf K}^T{\bf \Omega K} = {\bf L} - i{\bf K}^T{\bf \Omega K}
\end{equation*}
(note that ${\bf \Omega}^* ={\bf \Omega}^T= -{\bf \Omega }$). This implies that ${\bf L}^T = {\bf L}$. Furthermore, since transposition is a positive map, we have that ${\bf L} + i{\bf K}^T{\bf \Omega K}\geq {\bf 0}$. These together imply that ${\bf L}\geq {\bf 0}$ and we can therefore define a Gaussian postprocessing with the parameters $({\bf K},{\bf L}, {\bf m} + {\bf K}^T{\bf c})$. By applying this postprocessing to the observable $\G$, we obtain the Gaussian observable with parameters 
$$
({\bf AK}, {\bf L} + {\bf K}^T{\bf B} {\bf K},  {\bf m} + {\bf K}^T{\bf c})
$$
which coincide with the parameters of $\Lambda_{({\bf A},{\bf B},{\bf c})} (\E)$ by Eq.~\eqref{eqn:gaussian_preprocessing}. Hence, $\Lambda_{({\bf A},{\bf B},{\bf v})}$ is Gaussian incompatibility breaking.

Suppose then that ${\bf B} - i{\bf A}^T{\bf\Omega} {\bf A}\geq {\bf 0}$ does not hold, i.e., there exists a ${\bf z}\in\complex^{2N'}$ such that 
\begin{equation*}
\overline{{\bf z}}^T  ({\bf B} - i{\bf A}^T{\bf\Omega} {\bf A}) {\bf z} <0 \, .
\end{equation*}
Now ${\bf A}$ and ${\bf B}$ still need to satisfy the condition ${\bf B} + i {\bf \Omega} - i{\bf A}^T{\bf \Omega A} \geq {\bf 0}$ which, again by taking the transpose, implies that ${\bf B}\geq {\bf 0}$. If we now write ${\bf z} ={\bf x}+ i {\bf y}$ with ${\bf x},{\bf y}\in\real^{2N'}$ and use the fact that 
\begin{equation*}
{\bf x}^T{\bf A}^T{\bf\Omega} {\bf A}{\bf x} = {\bf y}^T{\bf A}^T{\bf\Omega} {\bf A}{\bf y}=0 \, , 
\end{equation*}
 we find that both ${\bf x}$ and ${\bf y}$ must be nonzero, as otherwise we would have $\overline{{\bf z}}^T {\bf Bz} <0$. Furthermore, since ${\bf B}^T = {\bf B}$, we have that 
$$
 \overline{{\bf z}}^T  ({\bf B} - i{\bf A}^T{\bf\Omega} {\bf A}) {\bf z}  = {\bf x}^T{\bf Bx} + {\bf y}^T{\bf By} + 2{\bf x}^T{\bf A}^T{\bf \Omega A y}<0
$$
so that 
\begin{equation*}
-{\bf x}^T{\bf A}^T{\bf \Omega A y} > \frac{1}{2}({\bf x}^T{\bf Bx} + {\bf y}^T{\bf By}) \geq 0
\end{equation*}
by the positivity of ${\bf B}$. This gives us
\begin{equation*}
\left({\bf x}^T{\bf A}^T{\bf \Omega A y}\right)^2 >\frac{1}{4}\left({\bf x}^T{\bf Bx} + {\bf y}^T{\bf By}\right)^2 
\end{equation*}

We can now define the two Gaussian observables $\E_1,\E_2$ with outcome set $\real$, with the parameters $({\bf x}, {\bf 0}, {\bf 0})$ and $({\bf y}, {\bf 0}, {\bf 0})$, which are then transformed into $({\bf Ax}, {\bf x}^T{\bf Bx}, {\bf c})$ and $({\bf Ay}, {\bf y}^T{\bf By}, {\bf c})$, respectively, by  the channel $\Lambda_{({\bf A},{\bf B},{\bf c})}$. 
We claim that the corresponding observables $\Lambda_{({\bf A},{\bf B},{\bf c})}(\E_1)$ and $\Lambda_{({\bf A},{\bf B},{\bf c})}(\E_2)$ are Gaussian incompatible. 
Suppose, on the contrary, that they are Gaussian compatible. Then by Prop.\ref{prop:finite} they would also have  a Gaussian joint observable $\G$ with outcome set $\real^2$, from which $\Lambda_{({\bf A},{\bf B},{\bf c})}(\E_j)$ are obtained as margins. Let $\G$ be determined by the parameters $({\bf K}, {\bf L}, {\bf m})$. The marginal condition gives us 
$$
{\bf K} = \left(  {\bf Ax}, \, {\bf Ay}\right), \quad{\bf L} = \left( \begin{array}{cc} {\bf x}^T {\bf Bx} & l_{12} \\ l_{12} & {\bf y}^T{\bf By} \end{array}\right), 
$$
and
$$ 
{\bf m}=\left( \begin{array}{c}{\bf c} \\ {\bf c} \end{array}\right),
$$  
where we have used the fact that ${\bf L}^T = {\bf L}$.  We therefore have 
$$
{\bf L} - i{\bf K}^T{\bf \Omega K} = \left( \begin{array}{cc} {\bf x}^T {\bf Bx} & l_{12} - i{\bf x}^T {\bf A}^T{\bf \Omega A}{\bf y} \\ l_{12} + i{\bf x}^T {\bf A}^T{\bf \Omega A}{\bf y}& {\bf y}^T{\bf By} \end{array}\right)
$$
but this implies that 
\begin{align*}
&\det({\bf L} - i{\bf K}^T{\bf \Omega K})  \\
&\qquad= ({\bf x}^T {\bf Bx})( {\bf y}^T{\bf By}) - l_{12}^2 - ( {\bf x}^T {\bf A}^T{\bf \Omega A}{\bf y})^2\\
&\qquad < ({\bf x}^T {\bf Bx})( {\bf y}^T{\bf By}) - l_{12}^2 -\frac{1}{4}\left({\bf x}^T{\bf Bx} + {\bf y}^T{\bf By}\right)^2\\
&\qquad= -l_{12}^2 - \frac{1}{4} \left(  {\bf x}^T{\bf Bx} - {\bf y}^T{\bf By}\right)^2\\
&\qquad\leq 0.
\end{align*}
In other words, ${\bf L} - i{\bf K}^T{\bf \Omega K}$ is not positive which is a contradiction. Hence, $\Lambda_{({\bf A},{\bf B},{\bf c})}(\E_1)$ and $\Lambda_{({\bf A},{\bf B},{\bf c})}(\E_2)$  are Gaussian incompatible which proves that $\Lambda_{({\bf A},{\bf B},{\bf c})}$ is not Gaussian incompatibility breaking.
\end{proof}

The two observables $\E_1$ and $\E_2$ constructed in the proof of Prop.~\ref{prop:gaussian_ibc} are generalised quadratures.
As a side result we hence also obtain the following.  

\begin{proposition}\label{prop:equivalent}
For a Gaussian channel $\Lambda$, the following conditions are equivalent:
\begin{itemize}
\item[{\rm (i)}] $\Lambda$ is Gaussian incompatibility breaking.
\item[{\rm (ii)}] $\Lambda$ breaks the Gaussian incompatibility of each pair of Gaussian observables.
\item[{\rm (iii)}] $\Lambda$ breaks the Gaussian incompatibility of the set of all generalised quadratures.
\item[{\rm (iv)}] $\Lambda$ breaks the Gaussian incompatibility of each pair of generalised quadratures.
\end{itemize}
\end{proposition}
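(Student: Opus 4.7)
The plan is to prove the four conditions equivalent by a circle of implications, all but one of which are immediate, and then to extract the remaining non-trivial implication directly from the construction already performed in the proof of Proposition~\ref{prop:gaussian_ibc}.

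First I would note the trivial implications. Clearly (i) $\Rightarrow$ (ii) and (i) $\Rightarrow$ (iii), since both (ii) and (iii) restrict attention to a subfamily of the set of all Gaussian observables. Similarly (ii) $\Rightarrow$ (iv), because a pair of generalised quadratures is in particular a pair of Gaussian observables, and (iii) $\Rightarrow$ (iv), because compatibility of a set of observables implies compatibility of each pair of observables in the set (one can simply take the appropriate two-variable margin of the joint Gaussian observable witnessing compatibility of the whole family, which is still a Gaussian observable by Eq.~\eqref{eqn:gaussian_postprocessing}). Hence the whole proposition reduces to showing (iv) $\Rightarrow$ (i).

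For this, the key observation is that the counterexamples produced in the second half of the proof of Proposition~\ref{prop:gaussian_ibc} are already generalised quadratures. The plan is therefore to argue by contrapositive: if $\Lambda = \Lambda_{({\bf A},{\bf B},{\bf c})}$ is not Gaussian incompatibility breaking, then by Proposition~\ref{prop:gaussian_ibc} the matrix inequality ${\bf B} - i{\bf A}^T{\bf \Omega}{\bf A} \geq {\bf 0}$ fails, so that there exist vectors ${\bf x}, {\bf y} \in \real^{2N'}$ for which the two Gaussian observables with parameter triples $({\bf x}, {\bf 0}, {\bf 0})$ and $({\bf y}, {\bf 0}, {\bf 0})$ yield, upon application of $\Lambda$, two Gaussian observables that fail to admit any Gaussian joint observable. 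Since observables with parameters of the form $({\bf k}, {\bf 0}, {\bf 0})$ are exactly the generalised quadratures (as recorded just after Eq.~\eqref{eqn:gaussian_preprocessing}, or in the discussion following the characterisation of Gaussian observables), this provides a pair of generalised quadratures whose Gaussian incompatibility is not broken by $\Lambda$, showing that (iv) fails.

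There is essentially no main obstacle here, since the heavy lifting is absorbed into Proposition~\ref{prop:gaussian_ibc}. The only point worth verifying carefully is the cosmetic one that the explicit construction in the latter proof really does yield generalised quadratures rather than more general Gaussian observables, and that Proposition~\ref{prop:finite} is invoked to phrase Gaussian incompatibility of a pair in terms of non-existence of a joint Gaussian observable, matching the formulation used in the counterexample.
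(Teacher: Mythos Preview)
Your proposal is correct and matches the paper's own argument: the paper states Proposition~\ref{prop:equivalent} as an immediate corollary of the proof of Proposition~\ref{prop:gaussian_ibc}, precisely because the incompatible pair $\E_1,\E_2$ constructed there consists of generalised quadratures, which is exactly the contrapositive $(\neg\text{(i)}\Rightarrow\neg\text{(iv)})$ you spell out. The only cosmetic remark is that in your (iii)$\Rightarrow$(iv) step the ``two-variable margin'' language is not quite apt for an infinite family, but the conclusion follows directly from Definition~\ref{def:gcomp} since any subset of a Gaussian compatible set is trivially Gaussian compatible via the same joint observable.
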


We recall that in the case of a finite dimensional Hilbert space, it is possible that a channel breaks the incompatibility of all pairs of observables but still does not break the incompatibility of a larger set of observables \cite{HeKiReSc15}. 
From this point of view Prop. \ref{prop:equivalent} reveals a qualitative difference between finite dimensional and Gaussian cases, reflecting the fact that Gaussian observables are very special.

\section{Connection to entanglement breaking channels and EPR-steering}\label{sec:con}
In this section we explicitly demonstrate that in the context of noisy Gaussian EPR-steering, the notion of Gaussian incompatibility breaking channels is exactly the appropriate generalisation of entanglement breaking channels.

\subsection{Entanglement breaking Gaussian channels}

Recall that a quantum channel $\Lambda$ is called entanglement breaking, if the bipartite state $(\Lambda_*\otimes {\rm Id} )(\rho)$ is separable for all $\rho$. 
It has been shown in \cite{Holevo08} that a Gaussian channel $\Lambda_{({\bf A},{\bf B},{\bf c})}$ is entanglement breaking if and only if  ${\bf B}$ can be decomposed as 
\begin{equation*}
{\bf B}={\bf B}_1 + {\bf B}_2  \quad \textrm{with} \quad  {\bf B}_1 + i{\bf \Omega}\geq {\bf 0} \, , \quad {\bf B}_2 - i {\bf A}^T{\bf \Omega A} \geq {\bf 0} \, .
\end{equation*}
It is now evident from this that such a channel is necessarily also Gaussian incompatibility breaking: the condition ${\bf B}_1 + i{\bf \Omega}\geq {\bf 0}$ implies that ${\bf B}_1\geq {\bf 0}$ which, combined with the second inequality, gives us 
\begin{equation*}
{\bf B} - i{\bf A}^T{\bf \Omega A}= {\bf B}_1 + {\bf B}_2 - i{\bf A}^T{\bf \Omega A}  \geq {\bf 0} \, .
\end{equation*}
This is exactly the condition stated in Prop. \ref{prop:gaussian_ibc}.

However, just as in the finite dimensional setting \cite{HeKiReSc15}, the converse implication does not hold in the Gaussian scenario. As a trivial example, consider the Gaussian channel $\Lambda_{({\bf I},{\bf I},{\bf 0})}$. This channel is Gaussian incompatibility breaking since ${\bf I} - i{\bf \Omega}\geq {\bf 0}$. If $\Lambda_{({\bf I},{\bf I},{\bf 0})}$ were entanglement breaking, then we would have a decomposition ${\bf I} = {\bf B}_1 + {\bf B}_2$ where ${\bf B}_1 + i{\bf \Omega}\geq {\bf 0}$ and ${\bf B}_2 - i{\bf \Omega}\geq {\bf 0}$. The latter inequality is equivalent to ${\bf B}_2 + i{\bf \Omega}\geq {\bf 0}$, hence we would obtain
\begin{equation*}
{\bf B}_1 + {\bf B}_2 + 2i{\bf \Omega} = {\bf I} + 2i{\bf \Omega}\geq {\bf 0} \,.
\end{equation*}

The above discussion is summarised in the following proposition.
\begin{proposition}
Every entanglement breaking Gaussian channel is Gaussian incompatibility breaking. The converse does not hold.
\end{proposition}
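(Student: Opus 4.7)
The plan is to derive the forward implication directly from Holevo's matrix characterization of entanglement breaking Gaussian channels recalled in the excerpt, and to dispose of the converse with the simple counterexample $\Lambda_{({\bf I},{\bf I},{\bf 0})}$ foreshadowed in the discussion preceding the statement.

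For the forward direction I start from the Holevo decomposition ${\bf B} = {\bf B}_1 + {\bf B}_2$ with ${\bf B}_1 + i{\bf \Omega} \geq {\bf 0}$ and ${\bf B}_2 - i{\bf A}^T{\bf \Omega}{\bf A} \geq {\bf 0}$. The key auxiliary fact, already used inside the proof of Prop.~\ref{prop:gaussian_ibc}, is that a matrix inequality of the form ${\bf M} + i{\bf N} \geq {\bf 0}$ with ${\bf M}, {\bf N}$ real and ${\bf N}$ antisymmetric forces ${\bf M}$ to be symmetric and positive semidefinite, since selfadjointness of the left hand side gives ${\bf M} = {\bf M}^T$, and averaging the inequality with its transpose kills the imaginary part. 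Applied to ${\bf B}_1$ this yields ${\bf B}_1 \geq {\bf 0}$, whence
\begin{equation*}
{\bf B} - i{\bf A}^T{\bf \Omega}{\bf A} = {\bf B}_1 + ({\bf B}_2 - i{\bf A}^T{\bf \Omega}{\bf A}) \geq {\bf 0},
\end{equation*}
which is exactly the criterion of Prop.~\ref{prop:gaussian_ibc}.

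For the converse I verify that $\Lambda_{({\bf I},{\bf I},{\bf 0})}$ is Gaussian incompatibility breaking but not entanglement breaking. The eigenvalues of $i{\bf \Omega}$ on $\real^2$ are $\pm 1$, so ${\bf I} - i{\bf \Omega} \geq {\bf 0}$, and Prop.~\ref{prop:gaussian_ibc} applies. Suppose for contradiction that the channel were entanglement breaking: then ${\bf I} = {\bf B}_1 + {\bf B}_2$ with ${\bf B}_1 + i{\bf \Omega} \geq {\bf 0}$ and ${\bf B}_2 - i{\bf \Omega} \geq {\bf 0}$. The second inequality is equivalent to ${\bf B}_2 + i{\bf \Omega} \geq {\bf 0}$, because transposition flips the sign of ${\bf \Omega}$ while preserving positivity. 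Adding the two then gives ${\bf I} + 2i{\bf \Omega} \geq {\bf 0}$, but this matrix has eigenvalues $1 \pm 2$, the required contradiction.

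There is no genuine obstacle here: the whole argument rests on the sign-flip ${\bf M} - i{\bf \Omega} \geq {\bf 0} \Leftrightarrow {\bf M} + i{\bf \Omega} \geq {\bf 0}$ for real symmetric ${\bf M}$, which is already a staple of the proofs in the excerpt. If one wanted to polish the write-up further, one could isolate this sign-flip as a short preliminary remark and invoke it cleanly in both halves of the proof.
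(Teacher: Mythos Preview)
Your proof is correct and follows essentially the same route as the paper: the forward direction uses Holevo's decomposition ${\bf B}={\bf B}_1+{\bf B}_2$ together with ${\bf B}_1\geq {\bf 0}$ to deduce the criterion of Prop.~\ref{prop:gaussian_ibc}, and the converse is dispatched with the same counterexample $\Lambda_{({\bf I},{\bf I},{\bf 0})}$, arriving at the contradiction ${\bf I}+2i{\bf \Omega}\geq {\bf 0}$. The only addition over the paper is that you explicitly compute the eigenvalues $1\pm 2$ to close the argument.
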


As a physically relevant example, let us now consider a Gaussian channel $\Lambda_{\bf B} = \Lambda_{({\bf I},{\bf B},{\bf 0})}$, in which case the only requirement for this to be a valid channel is the positivity of  ${\bf B}$. It follows that there exists a Gaussian probability measure $\mu$ such that $\widehat{\mu} ({\bf \Omega}^T{\bf x}) = e^{-\frac{1}{4}{\bf x}^T{\bf B x}}$. Using the commutation relation \eqref{eqn:commutation}, we see that 
\begin{align*}
\Lambda_{{\bf B}}(W({\bf x})) &= \widehat{\mu} ({\bf \Omega}^T{\bf x}) W({\bf x}) = \int e^{-i{\bf y}^T {\bf \Omega x}} \, d\mu({\bf y}) \, W({\bf x}) \\
&= \int W({\bf y})W({\bf x})W({\bf y})^*\, d\mu({\bf y})
\end{align*}
so that 
\begin{equation*}
\Lambda_{{\bf B}} (A) =  \int W({\bf y})AW({\bf y})^*\, d\mu({\bf y}) \, .
\end{equation*}
Such channels are sometimes called \emph{classical noise channels}.

Now the necessary and sufficient condition for $\Lambda_{{\bf B}}$ to be Gaussian incompatibility breaking reduces to ${\bf B} - i{\bf \Omega}\geq {\bf 0}$ which is equivalent to ${\bf B}$ being a valid covariance matrix of some Gaussian state $\rho$. In other words, $\tr{\rho W({\bf x})} = \widehat{\mu}({\bf \Omega}^T {\bf x})$ and since the Weyl transform is just the Fourier transform of the Wigner function, we find that for a Gaussian incompatibility breaking classical noise channel, the noise is always given by the (necessarily positive) Wigner function of a Gaussian state. 

In comparison, the channel $\Lambda_{\bf B} $ is entanglement breaking if and only if ${\bf B} = {\bf B}_1 + {\bf B}_2$ where ${\bf B}_1 +i{\bf \Omega}\geq {\bf 0}$ and ${\bf B}_2 - i{\bf \Omega}\geq {\bf 0}$. In other words, both ${\bf B}_1$ and ${\bf B}_2$ must be covariance matrices of some Gaussian states $\rho_1$ and $\rho_2$, respectively. But this means that $\tr{\rho W({\bf x})} = \tr{\rho_1W({\bf x})} \tr{\rho_2W({\bf x})}$, and since the Fourier transform maps products into convolutions, this implies that the Wigner function of $\rho$ can be written as the convolution of the Wigner functions of $\rho_1$ and $\rho_2$.

\subsection{Gaussian steering}
We now make a connection to known results on steerability of Gaussian states. In order to do that, we first need to formulate the EPR-steering scenario for observables having infinite number of outcomes.

The starting point is that of a correlation experiment where two parties, Alice and Bob, share a bipartite state $\rho$ and measure some observables $\A\in\h M_A$ and $\B\in\h M_B$ on their respective subsystems. The correlation table $\tr{\rho \A(X) \otimes \B(Y)}$ is said to  have a \emph{local classical model}, if there exist two families of Markov kernels $\{f_{\A}\}_{\A\in \mathcal M_A}$ and $\{g_{\B}\}_{\B\in \mathcal M_B}$ on a common probability space $(\Omega, \lambda)$, such that 
\begin{equation*}
{\rm tr}[\rho \A(X)\otimes \B(Y)]=\int_{\Omega} f_{\A}(X,\omega)g_{\B}(Y,\omega) {\rm d}\lambda(\omega),
\end{equation*}
for all $\A\in \mathcal M_A$ and $\B\in \mathcal M_B$. A {\em hidden state model} (on Bob's state space) is one for which $g_{\B}(Y,\omega)={\rm tr}[\rho(\omega) \B(Y)]$ for some (measurable) family of "hidden" states $\omega\mapsto \rho(\omega)$. If such a hidden state model does not exist, then it is said that Alice can \emph{steer} Bob's system, or that the state $\rho$ is {\em steerable} with Alice's measurements $\h M_A$.

For finite-dimensional systems, it has been shown \cite{UoMoGu14,QuVeBr14} that a pure state $\rho_0$ with full Schmidt rank is steerable with Alice's measurements $\h M_A$, if and only if $\h M_A$ is incompatible. Let then $\Lambda$ be a quantum channel. By the duality between states and measurements, the transformed state $(\Lambda_*\otimes {\rm Id})(\rho_0)$ is steerable with $\h M_A$ if and only if $\Lambda(\h M_A)$ is incompatible. From our point of view \cite{HeKiReSc15}, this means that the channel $\Lambda$ is incompatibility breaking, if and only if $(\Lambda_*\otimes {\rm Id})(\rho_0)$ is non-steerable by the total set of all measurements, and in that case, $(\Lambda_*\otimes {\rm Id})(\rho)$ is actually non-steerable for any state $\rho$. These general notions can also be extended to infinite-dimensional setting without much trouble; one merely needs to replace the maximally entangled state with the family $\rho_r$ of regularised EPR-states, i.e., Gaussian pure states with the covariance matrices 
$$
{\bf V}_0(r) = \left( \begin{array}{cc}
\cosh r {\bf I} & \sinh r{\bf Z} \\
\sinh r{\bf Z} & \cosh r {\bf I}
\end{array}\right) , \quad\text{ where }{\bf Z} = \bigoplus_{j=1}^N \sigma_z \, , 
$$
and do the corresponding regularisation for Alice's measurements. However, in this paper we do not need the general connection between incompatibility and steerability. Instead, we establish a fundamental relation between Gaussian incompatibility breaking quantum channels and steerability in the Gaussian setting. Concerning the latter, Wiseman et al.~have shown \cite{WiJoDo07} that a bipartite Gaussian state $ \rho$ with covariance matrix ${\bf V}$ is \emph{not} steerable with Alice's Gaussian measurements if and only if
$${\bf V} + i ({\bf 0}\oplus {\bf \Omega})\geq {\bf 0}.$$ 
In particular, the above EPR-states $\rho_r$ are all steerable; we again refer to \cite{WiJoDo07} for discussion on the original "EPR-paradox" in this context.

Suppose now that we begin with a bipartite Gaussian state $ \rho$ with covariance matrix ${\bf V}_0$, and subject it to a Gaussian channel $\Lambda_{({\bf A},{\bf B},{\bf c})}$ on Alice's side. 
The resulting state is then Gaussian with the covariance matrix
$$
{\bf V}={\bf \Omega} ({\bf A}\oplus {\bf I})^T {\bf \Omega}^T{\bf V}_0{\bf \Omega} ({\bf A}\oplus {\bf I} ){\bf \Omega}^T+{\bf \Omega}({\bf B}\oplus {\bf 0} ){\bf \Omega}^T.
$$
If it happens that ${\bf V} + i ({\bf 0}\oplus {\bf \Omega})\geq {\bf 0}$, then the final state is non-steerable and we may say that the channel $\Lambda_{({\bf A},{\bf B},{\bf c})}$ has \emph{broken the steerability} of $\rho$. If $\Lambda_{({\bf A},{\bf B},{\bf c})}$ breaks the steerability of any bipartite Gaussian state, then $\Lambda_{({\bf A},{\bf B},{\bf c})}$ may be called \emph{Gaussian steerability breaking}. With this terminology, we now have the following result.

\begin{proposition} Let $\Lambda$ be a Gaussian channel. The following conditions are equivalent:
\begin{itemize}
\item[(i)] $\Lambda$ is Gaussian incompatibility breaking;
\item[(ii)] $\Lambda$ is Gaussian steerability breaking;
\item[(iii)] $\Lambda$ breaks the steerability of all EPR-states $\rho_r$. 
\end{itemize}
\end{proposition}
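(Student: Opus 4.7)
The plan is to establish (i)$\Rightarrow$(ii)$\Rightarrow$(iii)$\Rightarrow$(i), with the middle implication immediate since each $\rho_r$ is a particular bipartite Gaussian state. For (i)$\Rightarrow$(ii), I would start from the explicit formula for the transformed covariance matrix ${\bf V}$ and rewrite the steering expression ${\bf V} + i({\bf 0}\oplus{\bf \Omega})$ so that a non-negative congruence of ${\bf V}_0 + i{\bf \Omega}$ separates off. Introducing $\tilde{\bf A} = {\bf \Omega}{\bf A}{\bf \Omega}^T$ and adding and subtracting $(\tilde{\bf A}^T\oplus{\bf I})(i{\bf \Omega})(\tilde{\bf A}\oplus{\bf I})$, I expect to obtain the identity
\begin{equation*}
{\bf V} + i({\bf 0}\oplus{\bf \Omega}) = (\tilde{\bf A}^T\oplus{\bf I})({\bf V}_0 + i{\bf \Omega})(\tilde{\bf A}\oplus{\bf I}) + {\bf \Omega}({\bf B} - i{\bf A}^T{\bf \Omega}{\bf A}){\bf \Omega}^T \oplus {\bf 0},
\end{equation*}
where the Alice cross terms collapse using ${\bf \Omega}^T{\bf \Omega} = {\bf I}$. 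The first summand is non-negative because ${\bf V}_0 + i{\bf \Omega}\geq {\bf 0}$ is the defining inequality for a valid covariance matrix, and the second is non-negative under assumption (i) by Prop.~\ref{prop:gaussian_ibc}. Since ${\bf V}_0$ was arbitrary, (ii) follows.

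For (iii)$\Rightarrow$(i), I would work out ${\bf V}(r) + i({\bf 0}\oplus{\bf \Omega})$ for a single EPR state $\rho_r$ with $r > 0$, using the block form of ${\bf V}_0(r)$. The Bob-Bob block of this expression is $\cosh r\cdot{\bf I} + i{\bf \Omega}$, which is strictly positive for $r > 0$ with inverse $\sinh^{-2}r\,(\cosh r\cdot{\bf I} - i{\bf \Omega})$ thanks to ${\bf \Omega}^2 = -{\bf I}$, so positivity of the full block matrix reduces via the Schur complement to positivity of the Alice-Alice block minus a specific correction. Using ${\bf Z}^2 = {\bf I}$ together with the anticommutation ${\bf Z}{\bf \Omega}{\bf Z} = -{\bf \Omega}$, I expect the $\cosh r\cdot\tilde{\bf A}^T\tilde{\bf A}$ contributions to cancel, leaving precisely ${\bf \Omega}({\bf B} - i{\bf A}^T{\bf \Omega}{\bf A}){\bf \Omega}^T$. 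Hence breaking the steerability of even a single $\rho_r$ already forces ${\bf B} - i{\bf A}^T{\bf \Omega}{\bf A} \geq {\bf 0}$, and Prop.~\ref{prop:gaussian_ibc} supplies (i).

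The principal technical point is this Schur-complement cancellation: it shows that the condition for $\Lambda$ to break the steerability of $\rho_r$ does not depend on $r$ and coincides verbatim with the matrix inequality of Prop.~\ref{prop:gaussian_ibc}. No limiting argument in $r$ is needed, and this cancellation is what ultimately locks the three conditions together.
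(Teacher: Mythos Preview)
Your proposal is correct. The implication (i)$\Rightarrow$(ii) is handled exactly as in the paper: both arguments split ${\bf V}+i({\bf 0}\oplus{\bf \Omega})$ into a congruence of ${\bf V}_0+i{\bf \Omega}$ by $\tilde{\bf A}\oplus{\bf I}={\bf \Omega}({\bf A}\oplus{\bf I}){\bf \Omega}^T$ plus the residual block ${\bf \Omega}({\bf B}-i{\bf A}^T{\bf \Omega}{\bf A}){\bf \Omega}^T\oplus{\bf 0}$.

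Where you diverge from the paper is in (iii)$\Rightarrow$(i). The paper rewrites ${\bf V}(r)=\tfrac12 e^r{\bf C}+\tfrac12 e^{-r}{\bf C}'+{\bf \Omega B\Omega}^T\oplus{\bf 0}$, restricts to vectors in $\ker{\bf C}$ (characterised by ${\bf z}_2={\bf \Omega ZA\Omega}^T{\bf z}_1$), and then lets $r\to\infty$ so that only the ${\bf B}$-term and the $i{\bf \Omega}$ on Bob's side survive; the identity ${\bf Z}^T{\bf \Omega}^T{\bf \Omega}{\bf \Omega}{\bf Z}=-{\bf \Omega}$ then produces the desired inequality. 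Your route instead fixes a single $r>0$, inverts the strictly positive Bob block $\cosh r\,{\bf I}+i{\bf \Omega}$, and takes the Schur complement. The cancellation you anticipate does occur: using ${\bf Z}{\bf \Omega}^T{\bf \Omega}{\bf Z}={\bf I}$ and ${\bf Z}{\bf \Omega}{\bf Z}=-{\bf \Omega}$ one gets ${\bf Z}{\bf \Omega}^T(\cosh r\,{\bf I}-i{\bf \Omega}){\bf \Omega}{\bf Z}=\cosh r\,{\bf I}+i{\bf \Omega}$, so the $\cosh r$ pieces drop out and the Schur complement is exactly ${\bf \Omega}({\bf B}-i{\bf A}^T{\bf \Omega}{\bf A}){\bf \Omega}^T$, independent of $r$. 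This is a genuinely different and slightly sharper argument: it shows that breaking the steerability of \emph{any one} $\rho_r$ with $r>0$ already forces the matrix inequality, whereas the paper's limiting argument a priori uses the full family. The paper's approach, on the other hand, avoids inverting a matrix and computing a Schur complement, trading that for a kernel characterisation and a limit.
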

\begin{proof}
We let $\Lambda=\Lambda_{({\bf A},{\bf B},{\bf c})}$ as above, and begin by looking at the matrix appearing in the non-steerability condition:
\begin{align*}
{\bf V}+i({\bf 0}\oplus {\bf \Omega}) &= {\bf \Omega} ({\bf A} \oplus{\bf I})^T{\bf\Omega}^T({\bf V}_0+i{\bf \Omega}){\bf \Omega}({\bf A}\oplus {\bf I}) {\bf \Omega}^T\\
&\quad+{\bf \Omega} (( {\bf B}-i{\bf A}^T{\bf \Omega} {\bf A})\oplus {\bf 0}){\bf \Omega}^T.
\end{align*}
The first term on the right-hand-side is positive since ${\bf V}_0+i{\bf \Omega}\geq {\bf 0}$ by the fact that ${\bf V}_0$ is a covariance matrix. If $\Lambda_{({\bf A},{\bf B},{\bf c})}$ is Gaussian incompatibility breaking, then by Prop. \ref{prop:gaussian_ibc} we have  ${\bf B}-i{\bf A}^T{\bf \Omega} {\bf A} \geq {\bf 0}$ and hence also the second term is positive. In other words, (i) implies (ii).

Trivially, (ii) implies (iii). In order to prove that (iii) implies (i), we consider the EPR-states $\rho_r$ defined above.
Using the fact that ${\bf \Omega}{\bf Z}{\bf \Omega}^T = - {\bf Z}$ we see that the covariance matrices of the transformed states are given by 
\begin{align*}
{\bf V}(r)=&\left(\begin{array}{cc} \cosh r\, {\bf \Omega}{\bf A}^T{\bf A\Omega}^T  &  -\sinh r \, {\bf \Omega }{\bf A}^T{\bf Z}{\bf \Omega}^T \\
-\sinh r \, {\bf \Omega ZA}{\bf \Omega}^T & \cosh r\, {\bf I}\end{array}\right)\\
& + \left(\begin{array}{cc}  {\bf \Omega B}{\bf \Omega}^T  & {\bf 0}\\ {\bf 0} & {\bf 0} \end{array}\right).
\end{align*} 
By denoting 
$$
{\bf C} = \left(\begin{array}{cc}  {\bf \Omega}{\bf A}^T{\bf A\Omega}^T &  - {\bf \Omega }{\bf A}^T{\bf Z}{\bf \Omega}^T \\
-{\bf \Omega ZA}{\bf \Omega}^T &  {\bf I}\end{array}\right) 
$$
and 
$$
{\bf C}' = \left(\begin{array}{cc}  {\bf \Omega}{\bf A}^T{\bf A\Omega}^T &   {\bf \Omega }{\bf A}^T{\bf Z}{\bf \Omega}^T \\{\bf \Omega ZA}{\bf \Omega}^T &  {\bf I}\end{array}\right) 
$$
we can write this as 
\begin{equation*}
{\bf V}(r) = \frac{1}{2}e^r {\bf C} + \frac{1}{2}e^{-r} {\bf C}' + {\bf \Omega B}{\bf \Omega}^T \oplus {\bf 0} \, .
\end{equation*}

Assuming that $\Lambda_{({\bf A}, {\bf B}, {\bf c})}$ breaks the steerability of $\rho_r$, then we must have ${\bf V}(r) + i({\bf 0}\oplus {\bf \Omega})\geq {\bf 0}$ for all $r$. In particular, if ${\bf z}\in{\rm Ker}\, {\bf C}$, then 
\begin{equation*}
\lim_{r\to\infty} \overline{{\bf z}}^T {\bf V}(r) {\bf z} = \overline{{\bf z}}^T  ({\bf \Omega B}{\bf \Omega}^T \oplus i {\bf \Omega} ){\bf z}\geq 0 \, .
\end{equation*}
By writing ${\bf z}= \left(\begin{array}{cc} {\bf z}_1 \\ {\bf z}_2  \end{array}\right)$ we find that ${\bf z}\in{\rm Ker}\, {\bf C}$ if and only if ${\bf z}_2 ={\bf \Omega ZA}{\bf \Omega}^T {\bf z}_1$. For such ${\bf z}$ we obtain
\begin{align*}
0& \leq \overline{{\bf z}}^T  ({\bf \Omega B}{\bf \Omega}^T \oplus i {\bf \Omega} ){\bf z} = \overline{\bf z}_1^T {\bf \Omega B}{\bf \Omega}^T{\bf z}_1 + \overline{{\bf z}}_2^T i{\bf \Omega} {\bf z}_2\\
&=  \overline{\bf z}_1^T{\bf \Omega} ({\bf B} + i {\bf A}^T {\bf Z}^T {\bf \Omega}^T {\bf \Omega}{\bf \Omega ZA}  ){\bf \Omega}^T {\bf z}_1\\
&=  \overline{\bf z}_1^T{\bf \Omega} ({\bf B} - i {\bf A}^T{\bf \Omega A}  ){\bf \Omega}^T {\bf z}_1.
\end{align*}
Since ${\bf z}_1$ is arbitrary and ${\bf \Omega}$ is invertible, we must have ${\bf B} - i {\bf A}^T{\bf \Omega A}\geq {\bf 0} $. Using Prop. \ref{prop:gaussian_ibc}, we conclude that $\Lambda_{({\bf A}, {\bf B}, {\bf c})}$ is Gaussian incompatibility breaking.
\end{proof}

\section{Conclusions}
We have characterised Gaussian channels which map the total set of all Gaussian observables into a set which is jointly measurable with a Gaussian joint observable. We call such channels Gaussian incompatibility breaking. We have shown that each entanglement breaking Gaussian channel is also Gaussian incompatibility breaking, but the converse does not hold. Finally, we have proven a connection to Gaussian EPR-steering by showing that Gaussian incompatibility breaking channels are exactly those channels which, when applied to one component of an arbitrary bipartite Gaussian state, make the state non-steerable with Gaussian measurements.

\section*{Acknowledgments}
JS acknowledges support from the Italian Ministry of Education, University and Research (FIRB project RBFR10COAQ). JK acknowledges support from the EPSRC project EP/J009776/1.


\begin{thebibliography}{10}

\bibitem{BrLo05}
S.~L. Braunstein and P.~van Loock.
\newblock Quantum information with continuous variables.
\newblock {\em Rev. Mod. Phys.}, 77:513--577, 2005.

\bibitem{WeEtal12}
C.~Weedbrook, S.~Pirandola, R.~Garcia-Patron, N.J. Cerf, T.C. Ralph, J.H.
  Shapiro, and S.~Lloyd.
\newblock Gaussian quantum information.
\newblock {\em Rev. Mod. Phys.}, 84:621--669, 2012.

\bibitem{WiJoDo07}
H.M. Wiseman, S.J. Jones, and A.C. Doherty.
\newblock Steering, entanglement, nonlocality, and the
  {E}instein-{P}odolsky-{R}osen paradox.
\newblock {\em Phys. Rev. Lett.}, 98:140402, 2007.

\bibitem{KoLeRaAd15}
I.~Kogias, A.R. Lee, S.~Ragy, and G.~Adesso.
\newblock Quantification of {G}aussian quantum steering.
\newblock {\em Phys. Rev. Lett.}, 114:060403, 2015.

\bibitem{JiKiNh15}
S.-W. Ji, M.S. Kim, and H.~Nha.
\newblock Quantum steering of multimode {G}aussian states by {G}aussian
  measurements: monogamy relations and the {P}eres conjecture.
\newblock {\em J. Phys. A: Math. Theor.}, 48:135301, 2015.

\bibitem{UoMoGu14}
R.~Uola, T.~Moroder, and O.~G{\"u}hne.
\newblock Joint measurability of generalized measurements implies classicality.
\newblock {\em Phys. Rev. Lett.}, 113:160403, 2014.

\bibitem{QuVeBr14}
M.T. Quintino, T.~Vertesi, and N.~Brunner.
\newblock Joint measurability, {E}instein-{P}odolsky-{R}osen steering, and
  {B}ell nonlocality.
\newblock {\em Phys. Rev. Lett.}, 113:160402, 2014.

\bibitem{HeKiReSc15}
T.~Heinosaari, J.~Kiukas, D.~Reitzner, and J.~Schultz.
\newblock Incompatibility breaking quantum channels.
\newblock {\em arXiv:1504.05768}, 2015.

\bibitem{QSCI12}
A.~S. Holevo.
\newblock {\em Quantum systems, Channels, Information: a Mathematical
  Introduction}.
\newblock Walter de Gruyter GmbH, Berlin, 2012.

\bibitem{SiMuDu94}
R.~Simon, N.~Mukunda, and B.~Dutta.
\newblock Quantum-noise matrix for multimode systems: ${U}(n)$ invariance,
  squeezing, and normal forms.
\newblock {\em Phys. Rev. A}, 49:1567--83, 1994.

\bibitem{GeCi02}
G.~Giedke and I.J. Cirac.
\newblock Characterization of {G}aussian operations and distillation of
  {G}aussian states.
\newblock {\em Phys. Rev. A}, 66:032316, 2002.

\bibitem{DeVaVe77}
B.~Demoen, P.~Vanheuverzwijn, and A.~Verbeure.
\newblock Completely positive maps on the {CCR}-algebra.
\newblock {\em Lett. Math. Phys.}, 2:161, 1977.

\bibitem{HoWe01}
A.~S. Holevo and R.~F. Werner.
\newblock Evaluating capacities of bosonic {G}aussian channels.
\newblock {\em Phys. Rev. A}, 63:032312, 2001.

\bibitem{KiSc13}
J.~Kiukas and J.~Schultz.
\newblock Informationally complete sets of of {G}aussian measurements.
\newblock {\em J. Phys. A: Math. Theor.}, 46:485303, 2013.

\bibitem{CaEiGiHo08}
F.~Caruso, J.~Eisert, V.~Giovannetti, and A.S. Holevo.
\newblock Multi-mode bosonic {G}aussian channels.
\newblock {\em New J. Phys.}, 10:083030, 2008.

\bibitem{AlCaHeTo09}
S.T. Ali, C.~Carmeli, T.~Heinosaari, and A.~Toigo.
\newblock Commutative {POVM}s and fuzzy observables.
\newblock {\em Found. Phys.}, 39:593--612, 2009.

\bibitem{Holevo08}
A.S. Holevo.
\newblock Entanglement-breaking channels in infinite dimensions.
\newblock {\em Problems of Information Transmission}, 44:171--184, 2008.

\end{thebibliography}
\end{document}